\documentclass[letterpaper,USenglish]{oasics-v2018} 

\usepackage{xspace}%
\usepackage{microtype}
\usepackage{xcolor}%
\usepackage{float}%
\usepackage{tikz}%

\usepackage[symbol]{footmisc}%
%


\bibliographystyle{plainurl}

\title{Subset Sum Made Simple} 

\titlerunning{Subset Sum Made Simple}
 
\author{Konstantinos Koiliaris}{Department of Computer Science, University of Illinois Urbana-Champaign\\{201 North Goodwin Avenue, Urbana, IL 61801, USA}}{koiliar2@illinois.edu}{https://orcid.org/0000-0002-1842-1829}{}

\author{Chao Xu}{
\emph{Current affiliation:} Yahoo! Research \\{770 Broadway, 6th Floor, New York, NY 10003, USA}\\
\emph{Previous affiliation:} Department of Computer Science, University of Illinois Urbana-Champaign\\{201 North Goodwin Avenue, Urbana, IL 61801, USA}}{chao.xu@oath.com}{https://orcid.org/0000-0003-4417-3299}{}

\authorrunning{K. Koiliaris and C. Xu}

\Copyright{Konstantinos Koiliaris and Chao Xu}

\subjclass{Theory of computation $\rightarrow$ Algorithm design techniques}

\keywords{subset sum, pseudopolynomial time, divide-and-conquer, \FFT}
\hideOASIcs
\category{}

\relatedversion{}

\supplement{}

\funding{}

\acknowledgements{We would like to thank Sariel Har-Peled for his invaluable help in the editing of this paper.}

\EventEditors{}
\EventNoEds{2}
\EventLongTitle{42nd Conference on Very Important Topics (CVIT 2016)}
\EventShortTitle{CVIT 2016}
\EventAcronym{CVIT}
\EventYear{2016}
\EventDate{December 24--27, 2016}
\EventLocation{Little Whinging, United Kingdom}
\EventLogo{}
\SeriesVolume{42}
\ArticleNo{1}  

\nolinenumbers 
\hideOASIcs  

\newcommand{\set}[1]{\left\{\, #1 \,\right\}}%
\newcommand{\FFT}{\textsf{F{F}T}\xspace}%
\newcommand{\SSSu}[1]{\mathcal{S}_{u}\bigl(#1\bigr)}%
\newcommand{\SSSCu}[1]{\mathcal{S}^{\#}_{u}\bigl(#1\bigr)}%
\newcommand{\SSSCub}[1]{\mathcal{S}^{\#}_{u/b}\bigl(#1\bigr)}%


\begin{document} 
 
\maketitle

\begin{abstract}
    \textsc{SubsetSum} is a classical optimization problem taught to undergraduates as an example of an NP-hard problem, which is amenable to dynamic programming, yielding polynomial running time if the input numbers are relatively small. Formally, given a set $S$ of $n$ positive integers and a target integer $t$, the \textsc{SubsetSum} problem is to decide if there is a subset of $S$ that sums up to $t$. Dynamic programming yields an algorithm with running time $O(nt)$. Recently, the authors \cite{koiliaris2017faster} improved the running time to $\tilde{O}\bigl(\sqrt{n}t\bigr)$, and it was further improved to $\tilde{O}\bigl(n+t\bigr)$ by a somewhat involved randomized algorithm by Bringmann \cite{bringmann2017near}, where $\tilde{O}$ hides polylogarithmic factors. 

    Here, we present a new and significantly simpler algorithm with running time $\tilde{O}\bigl(\sqrt{n}t\bigr)$. While not the fastest, we believe the new algorithm and analysis are simple enough to be presented in an algorithms class, as a striking example of a divide-and-conquer algorithm that uses \FFT to a problem that seems (at first) unrelated. In particular, the algorithm and its analysis can be described in full detail in two pages (see pages~\pageref{alg:start}--\pageref{alg:end}).
\end{abstract}

\section{Introduction}
    Given a (multi) set $S$ of $n$ positive integers and an integer target value $t$, the \textsc{SubsetSum} problem is to decide if there is a (multi) subset of $S$ that sums up to $t$. The \textsc{SubsetSum} is a classical problem with relatively long history.
    It is one of Karp's original NP-complete problems \cite{Karp1972}, closely related to other fundamental NP-complete problems such as \textsc{Knapsack} \cite{Dantzig1957}, \textsc{Constrained Shortest Path} \cite{AxiotisT18}, and various other graph problems with cardinality constraints \cite{Eppstein1997375,guruswami,NET:NET5}. 
    Furthermore, it is one of the initial \emph{weakly} NP-complete problems; problems that admit \emph{pseudopolynomial} time algorithms -- a classification identified by Garey and Johnson in \cite{garey1978strong}. The first such algorithm was given in 1957\,\footnote[1]{~Note that Bellman wrote this paper before the definition of pseudopolynomial time algorithms was provided by Garey and Johnson in 1977.} by Bellman, who showed how to solve the problem in $O(nt)$ time using dynamic programming \cite{Bellman1956}.   

    The importance of the \textsc{SubsetSum} problem in computer science is further highlighted by its role in teaching. Both the problem and its algorithm have been included in undergraduate algorithms courses' curriculums and textbooks for several decades (\cite[Chapter~34.5.5]{Cormen:2009:IAT:1614191}, used as archetypal examples for introducing the notions of weak NP-completeness and pseudopolynomial time algorithms to college students \cite[Chapter~8.8]{Kleinberg:2005:AD:1051910}. In addition, the conceptually simple problem statement makes this problem a great candidate in the study of NP-completeness \cite[Chapter~8.1]{Dasgupta:2006:ALG:1177299}), and, finally, Bellman's algorithm is also often introduced in the context of teaching dynamic programming \cite[Chapter~5.6]{jeffe2015notes}. 
    
    Extensive work has been done on finding better and faster pseudopolynomial time algorithms for the \textsc{SubsetSum} (for a collection of previous results see \cite[Table~1.1]{koiliaris2017faster}). The first improvement on the running time was a $O(nt/\log t)$ time algorithm by \cite{Pisinger19991}, almost two decades go. 
    Recently, the \emph{state-of-the-art} was improved significantly to $\tilde{O}\bigl(\sqrt{n}t\bigr)$ time by the authors \cite{koiliaris2017faster}.
    Shortly after, in a follow up work, the running time was further improved to $\tilde{O}(n+t)$ time by Bringmann \cite{bringmann2017near} -- the algorithm is randomized and somewhat involved. 
    Abboud et al. \cite{AbboudBHS17} showed that it is unlikely that any \textsc{SubsetSum} algorithm runs in time $O\bigl(t^{1-\varepsilon}\,2^{o(n)}\bigr)$, for any constant $\varepsilon>0$ and target number $t$, as such an algorithm would imply that the Strong Exponential Time Hypothesis (SETH) of Impagliazzo and Paturi \cite{ImpagliazzoP01} is false.

    In this paper, we present a new \emph{simple} algorithm for the \textsc{SubsetSum} problem. The algorithm follows the divide-and-conquer paradigm and uses the Fast Fourier Transform (\FFT), matching the best deterministic running time $\tilde{O}\bigl(\sqrt{n}t\bigr)$ of \cite{koiliaris2017faster} with a cleaner and more straightforward analysis. The algorithm partitions the input by congruence into classes, computes the subset sums of each class recursively, and combines the results. We believe this new simple algorithm, although not improving upon the state-of-the-art, reduces the conceptual complexity of the problem and improves our understanding of it. We believe the new algorithm can be used in teaching as an example of a pseudopolynomial time algorithm for the \textsc{SubsetSum} problem, as well as a striking example of applying \FFT to a seemingly unrelated problem.

    \paragraph*{Comparison to previous work}
    Our previous algorithm \cite{koiliaris2017faster} used a more complicated divide-and-conquer strategy that resulted in forming sets of two different types, that had to be handled separately. Bringmann's algorithm \cite{bringmann2017near} uses randomization and a two-stage color-coding process. Both algorithms are significantly more complicated than the one presented here.

\section{Preliminaries}
    Let $[u] = \bigl\{\,0, 1, \ldots, \lceil u\rceil\,\bigr\}$ denote the set of integers in the interval $\bigl[0, \lceil u\rceil\bigr]$. Given a set $X \subset \mathbb{N}$, let $\Sigma X = \sum_{x \in X} x$ and denote the \emph{set of all subset sums of $X$ up to $u$} by 
    \[  
        \SSSu{X} = \bigl\{\, \Sigma Y \bigm| Y \subseteq X \,\bigr\} \cap [u], 
    \]
    and the \emph{set of all subset sums of $X$ up to $u$ with cardinality information} by
    \[
        \SSSCu{X} = \bigl\{\, \bigl(\Sigma Y, |Y|\bigr) \bigm| Y \subseteq X \,\bigr\} \cap \bigl([u] \times \mathbb{N}\bigr).
    \]
    Let $X$, $Y$ be two sets, the \emph{set of pairwise sums of $X$ and $Y$ up to $u$} is denoted by
    \[
        X \oplus_u Y = \set{x+y \mid x \in X,\, y \in Y} \cap [u].
    \]
    If $X$, $Y \subseteq \mathbb{N} \times \mathbb{N}$ are sets of points in the plane, then
    \[
        X \oplus_u Y = \set{(x_1+y_1, x_2+y_2) \mid (x_1,x_2)\in X,\, (y_1,y_2)\in Y } \cap \bigl([u] \times \mathbb{N}\bigr).
    \]

    Observe, that if $X$ and $Y$ are two disjoint sets, then $\SSSu{X \cup Y} = \SSSu{X} \oplus_u \SSSu{Y}$. 

    Next, we define two generalizations of the \textsc{SubsetSum} problem. Both can be solved by the new algorithm.
          
    \begin{figure}[ht]
        \begin{tikzpicture}
            \draw node[fill=black!12,inner sep=1ex,text width=(\textwidth/2)-2.5ex] {
                \textsc{AllSubsetSums}\\
                \vspace*{.21cm}
                \small
                {\small\textsf{INPUT:}} Given a set $S$ of $n$ positive integers and an upper bound integer $u$. \\
                \vspace*{.20cm}
                {\small\textsf{OUTPUT:}} The set of \emph{all realizable subset sums} of $S$ up to $u$. 
                \vspace*{.25cm}};
        \end{tikzpicture}
        \begin{tikzpicture}
            \draw node[fill=black!12,inner sep=1ex,text width=(\textwidth/2)-2.5ex] {
                \textsc{AllSubsetSums}$^{\#}$\\
                \vspace*{.15cm}  
                \small
                {\small\textsf{INPUT:}} Given a set $S$ of $n$ positive integers and an upper bound integer $u$. \\
                {\small\textsf{OUTPUT:}} The set of all realizable subset sums along with \emph{the size of the subset that realizes each sum} of $S$ up to $u$.}; 
        \end{tikzpicture}
        \caption{Two generalizations of the \textsc{SubsetSum} problem.}
        \label{algorithm} 
    \end{figure}

    Note that the case where the input is a multiset can be reduced to the case of a set with little loss in generality and running time (see \cite[Section 2.2]{koiliaris2017faster}), hence for simplicity of exposition we assume the input is a \emph{set} throughout the paper. 

\section{The algorithm} 
\label{alg:start} 
    Here, we show how to solve \textsc{AllSubsetSums} in $\tilde{O}\bigl(\sqrt{n}t\bigr)$ time. Clearly, computing all subset sums up to $u$ also decides \textsc{SubsetSum} with target value $t \leq u$.
    
    \subsection{Building blocks}

        The following well-known lemma describes how to compute pairwise sums between sets in almost linear time, in the size of their ranges, using \FFT.

        \begin{lemma}[Computing pairwise sums $\oplus_u$]
            \label{lem:l1}
            The following are true:
            \begin{enumerate}[(A)]
                \item Given two sets $S$, $T \subseteq [u]$, one can compute $S \oplus_u T$ in $O( u \log u )$ time. \label{(A)}
                \item Given $k$ sets $S_1, \dots, S_k \subseteq [u]$, one can compute $S_1 \oplus_u \dots \oplus_u S_k$ in $O( k\,u \log u )$ time. \label{(B)}
                \item Given two sets of points $S$, $T \subseteq [u] \times [v]$, one can compute $S \oplus_u T$ in $O\bigl( u\,v \log (u\,v)\bigr)$ time. \label{(C)} 
            \end{enumerate}
        \end{lemma}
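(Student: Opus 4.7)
The plan is to treat each set as a $0/1$ indicator vector and reduce the $\oplus_u$ operation to polynomial multiplication, which \FFT computes in nearly linear time.

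For part (\ref{(A)}), I would represent $S$ and $T$ by their characteristic polynomials $P_S(x) = \sum_{s \in S} x^s$ and $P_T(x) = \sum_{t \in T} x^t$, both of degree at most $\lceil u\rceil$. Compute the product $P_S \cdot P_T$ using \FFT in $O(u \log u)$ time; then $S \oplus_u T$ is exactly the set of exponents $i \in [u]$ whose coefficient in $P_S \cdot P_T$ is nonzero. Extracting this set from the coefficient array takes an additional $O(u)$ pass, so the total cost is $O(u \log u)$.

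For part (\ref{(B)}), I would iterate (\ref{(A)}): starting with $R_1 = S_1$, compute $R_{i+1} = R_i \oplus_u S_{i+1}$ for $i = 1, \ldots, k-1$. Each intermediate set $R_i$ is a subset of $[u]$ by definition of $\oplus_u$, so each of the $k-1$ applications of (\ref{(A)}) costs $O(u \log u)$, giving $O(k\,u \log u)$ overall. One small thing to verify is that truncating to $[u]$ after every step does not change the final answer, which is immediate because dropping a value strictly greater than $u$ can only produce further sums that are still strictly greater than $u$.

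For part (\ref{(C)}), the key step is to reduce two-dimensional convolution to the one-dimensional case so that (\ref{(A)}) applies as a black box. I would map each point $(a,b) \in [u] \times [v]$ to the integer $\phi(a,b) = a\,M + b$, where $M$ is chosen large enough that coordinate-wise addition is preserved under $\phi$; taking $M = 2\lceil v\rceil + 1$ suffices, since any sum $b_1 + b_2$ of second coordinates is at most $2\lceil v\rceil < M$, so there is no carry into the first coordinate. The images $\phi(S), \phi(T)$ lie in $[uM + v] = O(uv)$, hence by (\ref{(A)}) their pairwise sum set can be computed in $O(uv \log(uv))$ time, and inverting $\phi$ (dividing by $M$ with remainder) recovers $S \oplus_u T$ after filtering out pairs whose first coordinate exceeds $\lceil u\rceil$. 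The only step that requires any thought is choosing $M$ correctly; once $\phi$ is an addition-preserving injection on the relevant range, the bound follows immediately from part (\ref{(A)}).
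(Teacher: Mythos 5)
Your proofs of parts (\ref{(A)}) and (\ref{(B)}) are exactly the paper's: characteristic polynomials multiplied via \FFT, then iterated $k-1$ times with each intermediate set staying inside $[u]$. For part (\ref{(C)}) you take a genuinely different route. The paper works directly in two dimensions: it forms the bivariate characteristic polynomials $f_S(x,y)=\sum_{(i,j)\in S}x^i y^j$ and $f_T$, multiplies them with a two-dimensional \FFT in $O(uv\log(uv))$ time, and reads off the nonzero coefficients. You instead use a Kronecker-style substitution, packing $(a,b)$ into the single integer $aM+b$ with $M=2\lceil v\rceil+1$ chosen so that coordinate-wise addition incurs no carries, and then invoke part (\ref{(A)}) as a black box on integers of size $O(uv)$. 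Both arguments are correct and give the same bound; yours has the pedagogical advantage of needing only one-dimensional \FFT, while the paper's avoids the encoding step and generalizes verbatim to higher dimensions. One small point of care in your reduction: since the output $S\oplus_u T$ truncates only the first coordinate, second coordinates of output points can be as large as $2\lceil v\rceil$, so when you apply (\ref{(A)}) you must take the truncation threshold to be at least $\lceil u\rceil M+2\lceil v\rceil$ (e.g., twice $\lceil u\rceil M+\lceil v\rceil$) rather than $\lceil u\rceil M+\lceil v\rceil$; otherwise sums with first coordinate exactly $\lceil u\rceil$ and large second coordinate would be dropped before your final filtering step. This changes nothing asymptotically.
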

        \begin{proof}
            \textbf{(A)} Let $f_S = f_S(x) = \sum_{i \in S} x^i$ be the characteristic polynomial of $S$. Construct, in a similar fashion, the polynomial $f_T$ (for the set $T$) and let $g = f_S * f_T$. Observe that for $i\leq u$, the coefficient of $x^i$ in $g$ is nonzero if and only if $i \in S \oplus_u T$. Using \FFT, one can compute the polynomial $g$ in $O(u \log u)$ time, and extract $S \oplus_u T$ from it. 

            \textbf{(B)} Let $Z_1 = S_1$, and let $Z_i =Z_{i-1}  \oplus_u S_i$, for $i \in [2, k]$. Compute each $Z_i$, from $Z_{i-1}$ and $S_i$, in $O\bigl(u\log u\bigr)$ time using part (A). The total running time is $O\bigl( k\,u \log u \bigr)$. 

            \textbf{(C)} As in (A), let $f_S = f_S(x,y) = \sum_{(i,j) \in S} x^iy^j$ and $f_T$ be the characteristic polynomials of $S$ and $T$, respectively, and let $g = f_S * f_T$. For $i \leq u$ the coefficient of $x^iy^j$ is nonzero if and only if $(i,j)\in S \oplus_u T$. One can compute the polynomial $g$ by a straightforward reduction to regular \FFT (see multidimensional \FFT \cite[Chapter 12.8]{Blahut:1985:FAD:537283}), in $O\bigl( u\,v \log (u\,v)\bigr)$ time, and extract $S \oplus T$ from it.
        \end{proof} 
        
        The next lemma shows how to answer \textsc{AllSubsetSums}$^{\#}$ quickly, originally shown by the authors in \cite{koiliaris2017faster}, the proof is included for completeness. 

        \begin{lemma}[\textsc{AllSubsetSums}$^{\#}$ \cite{koiliaris2017faster}]
            \label{them:SSSC}
            Let $S \subseteq [u]$ be a given set of $n$ elements. One can compute, in $O\bigl( u \, n \log n \log u \bigr)$ time, the set $\SSSCu{S}$, which includes all subset sums of $S$ up to $u$ with cardinality information.
        \end{lemma}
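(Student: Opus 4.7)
The plan is to apply binary divide-and-conquer, mirroring part~(B) of Lemma~\ref{lem:l1} but using the two-dimensional pairwise sum from part~(C) to track cardinalities along with sums. First I would split $S$ into two disjoint pieces $S_1, S_2$ of sizes $\lceil n/2 \rceil$ and $\lfloor n/2 \rfloor$, recursively compute $\SSSCu{S_1}$ and $\SSSCu{S_2}$, and output $\SSSCu{S_1} \oplus_u \SSSCu{S_2}$. Correctness is immediate from the observation that every $Y \subseteq S$ decomposes uniquely as $Y = Y_1 \cup Y_2$ with $Y_i \subseteq S_i$, and $\bigl(\Sigma Y, |Y|\bigr) = \bigl(\Sigma Y_1 + \Sigma Y_2,\, |Y_1| + |Y_2|\bigr)$, which is exactly how the 2D $\oplus_u$ operator combines the two coordinates.

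For the running time I would analyze the recursion level by level. At level $i$ (leaves at level $0$) there are $O(n/2^i)$ subproblems, and the 2D subset-sum set produced by a subproblem at this level is contained in $[u] \times [2^i]$, since its cardinality coordinate is bounded by the size of the current subset. By Lemma~\ref{lem:l1}(C), one merge at level $i$ costs $O\bigl(u \cdot 2^i \log(u \cdot 2^i)\bigr)$ time, so the total work at that level is $O\bigl(u n \log(u n)\bigr)$. Summing across the $O(\log n)$ recursion levels gives $O\bigl(u n \log n \log(u n)\bigr)$.

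To match the stated bound, I would use that $S$ consists of $n$ distinct positive integers from $[u]$, so $n \le u + 1$ and hence $\log(u n) = O(\log u)$, reducing the running time to $O\bigl(u n \log n \log u\bigr)$. The only mildly subtle point is keeping the cardinality coordinate correctly bounded by $2^i$ at level $i$ (rather than by $n$ throughout) so that the cost from Lemma~\ref{lem:l1}(C) telescopes cleanly across levels; once that is in place, the rest is a mechanical application of the previous lemma.
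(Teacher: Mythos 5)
Your proposal is correct and follows essentially the same route as the paper: split $S$ into two halves, recurse, and merge with the two-dimensional $\oplus_u$ of Lemma~\hyperref[lem:l1]{\ref*{lem:l1}.\ref*{(C)}}, with the cardinality coordinate bounded by the size of the current subset. The paper packages the cost analysis as the recurrence $T(n)=2\,T(n/2)+O(u\,n\log u)$ rather than summing level by level (and likewise uses $n\le u+1$ implicitly to absorb $\log n$ into $\log u$), but this is only a cosmetic difference.
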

        \begin{proof}
            Partition $S$ into two sets $S_1$ and $S_2$ of roughly the same size. Compute $\SSSCu{S_1}$ and $\SSSCu{S_2}$ recursively, and observe that $\SSSCu{S_1}$, $\SSSCu{S_2} \subseteq \bigl([u] \times \bigl[\frac{n}{2}\bigr]\bigr)$. Finally, note that $\SSSCu{S_1} \oplus_u \SSSCu{S_2} = \SSSCu{S}$. Applying Lemma~\hyperref[lem:l1]{\ref*{lem:l1}.\ref*{(C)}} yields $\SSSCu{S}$.
            
            The running time follows the recursive formula $T(n)=2\cdot T(n/2)+ O\bigl(u\, n \log u\bigr)$, which is $O\bigl(u\, n \log u\log n\bigr)$, proving the claim.  
        \end{proof}

        Next, we show how to compute the subset sums of elements in a congruence class quickly.

        \begin{lemma}
            \label{cor:FastSi}
            Let $\ell$, $b \in \mathbb{N}$ with $\ell < b$. Given a set $S \subseteq \set{ x \in \mathbb{N} \mid x \equiv \ell \pmod b }$ of size $n$, one can compute $\SSSu{S}$ in $O\bigl((u/b)\, n \log n \log u\bigr)$ time.
        \end{lemma}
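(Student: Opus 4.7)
The plan is to reduce this to an instance of \textsc{AllSubsetSums}$^{\#}$ on a smaller-range ``compressed'' set, and then invoke Lemma~\ref{them:SSSC}.

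First, I would observe that every $x \in S$ can be written uniquely as $x = b\,q_x + \ell$ with $q_x \in \mathbb{N}$, so for any $Y \subseteq S$ with $|Y| = k$,
\[
    \Sigma Y \;=\; b\sum_{x \in Y} q_x \;+\; k\ell.
\]
Thus every subset sum of $S$ is completely determined by the pair consisting of the ``quotient sum'' $\sum_{x \in Y} q_x$ and the cardinality $k$---which is precisely the information captured by the cardinality-augmented subset-sum set.

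Next, I would define the compressed set $S' = \set{(x-\ell)/b \mid x \in S}$. Without loss of generality one may first discard any $x \in S$ with $x > u$ (such elements cannot appear in any subset summing to at most $u$); then $|S'| \leq n$. Moreover, for any $Y \subseteq S$ with $\Sigma Y \leq u$, the identity above together with $k\ell \geq 0$ forces $\sum_{x\in Y} q_x \leq u/b$, so it suffices to find all subset sums of $S'$ up to $u/b$. I would then apply Lemma~\ref{them:SSSC} to $S'$ with upper bound $u/b$ to compute $\SSSCub{S'}$ in time
\[
    O\bigl((u/b)\, n \log n \log (u/b)\bigr) \;=\; O\bigl((u/b)\, n \log n \log u\bigr).
\]

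Finally, to recover $\SSSu{S}$ itself, I would sweep over the pairs $(s,k) \in \SSSCub{S'}$, emit the value $bs + k\ell$ whenever it is at most $u$, and deduplicate. By the displayed identity, this produces exactly $\SSSu{S}$. The post-processing cost is linear in $\bigl|\SSSCub{S'}\bigr| = O((u/b)\,n)$, which is absorbed by the bound above. I do not foresee a real technical obstacle; the only point that needs a sentence of care is justifying $s \leq u/b$ for all relevant pairs, which follows from $\ell,k \geq 0$. The essential idea is simply that congruence-class structure shrinks the range of the target from $u$ down to $u/b$ at the cost of tracking one extra coordinate (cardinality), which is exactly what Lemma~\ref{them:SSSC} is designed to handle.
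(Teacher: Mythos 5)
Your proposal is correct and is essentially identical to the paper's own argument: both compress the class via $Q = \set{(x-\ell)/b \mid x \in S}$, invoke Lemma~\ref{them:SSSC} with upper bound $u/b$, and recover each subset sum as $zb + j\ell$ using the cardinality coordinate. Your extra remarks about discarding elements exceeding $u$ and filtering outputs above $u$ are harmless refinements of the same proof.
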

        \begin{proof}
            An element $x \in S$ can be written as $x = y b + \ell$. Let $Q=\set{ y \mid yb+\ell \in S}$. As such, for any subset $X = \set{  y_1 b + \ell, \dots, y_j b + \ell } \subseteq S$ of size $j$, we have that
            \begin{equation*} 
            \sum_{x \in X} x%
            =%
            \sum_{i=1}^j (\,y_i b + \ell\,)
            =%
            \Bigl(\sum_{i=1}^j y_i\Bigr) b + j \ell
            \end{equation*}
            In particular, a pair $(z,j) \in \SSSCub{Q}$ corresponds
            to a set $Y=\set{y_1, \dots, y_j}\subseteq Q$ of size $j$, such that $\sum_i y_i = z$. The set $Y$ in turn corresponds to the set $X = \set{ y_1 b + \ell, \ldots, y_j b + \ell} \subseteq S$. By the above, the sum of the elements of $X$ is $ z b + j \ell$. 
            As such, compute $\SSSCub{Q}$, using the algorithm of Lemma~\ref{them:SSSC}, and return $\bigl\{\, z b + j \ell \bigm| (z,j) \in \SSSCub{Q} \,\bigr\} = \SSSu{S}$ as the desired result.
        \end{proof}

    \subsection{Algorithm}
        The new algorithm partitions the input into sets by congruence. Next it computes the \textsc{AllSubsetSums}$^{\#}$ for each such set, and combines the results. The algorithm is depicted in Figures~\ref{fig:alg2} and \ref{fig:alg1}.

        \begin{figure}[htbp]
            \begin{tikzpicture}
            \draw node[fill=black!12,inner sep=1ex,text width=\textwidth-2ex] {
                \underline{\textsc{AllSubsetSums}$^{\#}(S,u)$:}
                \vspace*{-.15cm} 
                \begin{tabbing}
                    {\small\textsf{INPUT:}} \quad \=A set $S$ of $n$ positive integers and an upper bound integer $u$.\\
                    {\small\textsf{OUTPUT:}} \>The set of all subset sums with cardinality information of $S$ up to $u$.
                \end{tabbing}
                \vspace*{-.45cm} 
                \begin{enumerate}
                    \item \textbf{if} $S=\set{x}$
                    \item \qquad \textbf{return} $\set{(0,0),(x,1)}$
                    \item $T \gets$ an arbitrary subset of $S$ of size $\lfloor n/2 \rfloor$ 
                    \item \textbf{return} $\textsc{AllSubsetSums}^{\#}(T,u) \oplus_u \textsc{AllSubsetSums}^{\#}(S \setminus T,u)$
                \end{enumerate}}; 
            \end{tikzpicture}
            \caption{The algorithm for the \textsc{AllSubsetSums}$^{\#}$ problem, used as a subroutine in Figure~\ref{fig:alg1}.}
            \label{fig:alg2}  
        \end{figure}
        \begin{figure}[htbp] 
            \begin{tikzpicture}
            \draw node[fill=black!12,inner sep=1ex,text width=\textwidth-2ex] {
                \underline{\textsc{AllSubsetSums}$(S,u)$:}
                \vspace*{-.15cm} 
                \begin{tabbing}
                    {\small\textsf{INPUT:}} \quad \=A set $S$ of $n$ positive integers and an upper bound integer $u$.\\
                    {\small\textsf{OUTPUT:}} \>The set of all realizable subset sums of $S$ up to $u$.
                \end{tabbing}
                \vspace*{-.15cm} 
                \begin{enumerate}
                    \item $b \gets \bigl\lfloor\sqrt{n \log n}\bigr\rfloor$
                    \item \textbf{for} $\ell\in [b-1]$ \textbf{do}
                    \item \qquad  $S_{\ell} \gets S \cap \set{ x \in \mathbb{N} \bigm| x \equiv \ell \pmod b }$
                    \item \qquad  $Q_{\ell} \gets \set{ (x-\ell)/b \bigm| x \in S_{\ell} }$
                    \item \qquad  $\SSSCub{Q_{\ell}} \gets$ \textsc{AllSubsetSums}$^{\#}\bigl(Q_{\ell}, \bigl\lfloor u/b\bigr\rfloor\bigr)$
                    \item \qquad  $R_{\ell} \gets \bigl\{\, z b + \ell j \bigm| (z,j)\in \SSSCub{Q_{\ell}}\,\bigr\}$ 
                    \item \textbf{return} $ R_0 \oplus_u \dots \oplus_u R_{b-1}$ 
                \end{enumerate}};
            \end{tikzpicture}
            \caption{The algorithm for \textsc{AllSubsetSums}.}
            \label{fig:alg1}
        \end{figure}

    \subsection{Result}

        \begin{theorem}[\textsc{AllSubsetSums}]
            \label{thm:mt}%
            Let $S \subseteq [u]$ be a given set of $n$ elements. One can compute, in $O\bigl( \sqrt{n \log n}\, u \log u \bigr)$ time, the set $\SSSu{S}$, which contains all subset sums of $S$ up to $u$.
        \end{theorem}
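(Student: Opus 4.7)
My plan is to verify correctness of the algorithm in Figure~\ref{fig:alg1} and then tally the running time across its two phases; the running time is the main content, and the balancing of the two phases is the whole point of the choice of $b$.

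\emph{Correctness.} The sets $S_0, \ldots, S_{b-1}$ from step~3 partition $S$ by residue modulo $b$, so the observation in Section~2 gives $\SSSu{S} = \SSSu{S_0} \oplus_u \cdots \oplus_u \SSSu{S_{b-1}}$. Therefore step~7 is correct as long as each $R_\ell$ agrees with $\SSSu{S_\ell}$ on $[u]$. The bijection $x \leftrightarrow (x-\ell)/b$ between $S_\ell$ and $Q_\ell$ lifts to a correspondence between subsets: a subset $X \subseteq S_\ell$ of size $j$ whose corresponding $Y \subseteq Q_\ell$ sums to $z$ satisfies $\Sigma X = zb + \ell j$, which is exactly the transformation in step~6. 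The inner call to \textsc{AllSubsetSums}$^{\#}$ is correct by Lemma~\ref{them:SSSC}, so every element of $\SSSu{S_\ell} \cap [u]$ appears in $R_\ell$; any ``extra'' values $zb + \ell j > u$ that $R_\ell$ may contain are harmless, since they are discarded by the truncation inside $\oplus_u$ in step~7 (or one may intersect each $R_\ell$ with $[u]$ before combining).

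\emph{Running time.} Let $n_\ell = |S_\ell|$, so that $\sum_\ell n_\ell = n$. The iteration for residue $\ell$ is dominated by the call in step~5, which by Lemma~\ref{them:SSSC} runs in $O\bigl((u/b)\, n_\ell \log n_\ell \log(u/b)\bigr) = O\bigl((u/b)\, n_\ell \log n \log u\bigr)$ time. Summing over $\ell$, the total inner work is $O\bigl((u/b)\, n \log n \log u\bigr)$. Step~7 then performs a $b$-fold pairwise sum of subsets of $[u]$, which by Lemma~\hyperref[lem:l1]{\ref*{lem:l1}.\ref*{(B)}} costs $O(b\, u \log u)$ time.

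The choice $b = \bigl\lfloor\sqrt{n \log n}\bigr\rfloor$ is designed precisely to balance these two contributions: with this choice, $(u/b)\, n \log n = b\, u = u \sqrt{n \log n}$, and the total cost becomes $O\bigl(\sqrt{n \log n}\, u \log u\bigr)$, as claimed. I do not anticipate any real obstacle in the proof; the only cosmetic care needed is to observe that $\log(u/b) = O(\log u)$ when folding the two logarithmic factors together, and to handle the mild overshoot of $R_\ell$ past $u$ as noted above.
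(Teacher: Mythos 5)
Your proof is correct and follows essentially the same route as the paper: partition by residue modulo $b$, handle each class via the cardinality-annotated subset sums of the scaled-down set $Q_\ell$, combine with Lemma~\hyperref[lem:l1]{\ref*{lem:l1}.\ref*{(B)}}, and balance the two phases with $b = \lfloor\sqrt{n\log n}\rfloor$. The only cosmetic difference is that you inline the argument of Lemma~\ref{cor:FastSi} (the bijection $x \leftrightarrow (x-\ell)/b$ and the identity $\Sigma X = zb + \ell j$) instead of citing it, and you spell out the harmless overshoot of $R_\ell$ past $u$, which the paper leaves implicit.
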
 
        \begin{proof}
            Partition $S$ into $b = \lfloor\sqrt{n \log n}\rfloor$ sets $S_{\ell} = S$ $\cap$ $\set{x \in \mathbb{N} \mid x\equiv \ell \pmod b}$, $\ell \in [b-1]$, each of $n_{\ell}$ elements. For each $S_{\ell}$, compute the set of all subset sums $\SSSu{S_{\ell}}$ in $O\bigl((u/b)\, n_{\ell} \log n_{\ell} \log u\bigr)$ time by Lemma~\ref{cor:FastSi}. The time spent to compute all $\SSSu{S_{\ell}}$ is $\sum_{\ell\in [b-1]} O\bigl((u/b)\, n_{\ell} \log n_{\ell} \log u\bigr) = O\bigl((u/b)\, n \log n \log u\bigr)$. Combining $\SSSu{S_{0}} \oplus_u \dots \oplus_u \SSSu{S_{b-1}}$ using Lemma~\hyperref[lem:l1]{\ref*{lem:l1}.\ref*{(B)}} takes $O( b\, u \log u )$ time. Hence, the total running time is $O\bigl(\bigl(u/\lfloor\sqrt{n \log n}\rfloor\bigr)\, n \log n \log u + \lfloor\sqrt{n \log n}\rfloor\, u \log u \bigr) = O\bigl( \sqrt{n \log n}\, u \log u \bigr)$.
            \label{alg:end}    
        \end{proof}

        \begin{remark}
            \textsc{AllSubsetSums} is a generalization of \textsc{SubsetSum}, so the algorithm of \autoref{thm:mt} applies to it.
        \end{remark}

%
\bibliography{main}
\end{document}